\numberwithin{equation}{section}
\numberwithin{figure}{section}
\theoremstyle{plain}
\newtheorem{thm}{\protect\theoremname}[section]
\theoremstyle{remark}
\newtheorem{rem}[thm]{\protect\remarkname}
\theoremstyle{definition}
\newtheorem{defn}[thm]{\protect\definitionname}
\theoremstyle{plain}
\newtheorem{prop}[thm]{\protect\propositionname}
\theoremstyle{plain}
\newtheorem{lem}[thm]{\protect\lemmaname}
\theoremstyle{definition}
\newtheorem{example}[thm]{\protect\examplename}
\providecommand{\definitionname}{Definition}
\providecommand{\examplename}{Example}
\providecommand{\lemmaname}{Lemma}
\providecommand{\propositionname}{Proposition}
\providecommand{\remarkname}{Remark}
\providecommand{\theoremname}{Theorem}
\begin{document}
\title{Transforming antiunitary symmetries to a normal form}
\author{Terry A. Loring}
\address{Department of Mathematics and Statistics, University of New Mexico,
Albuquerque, New Mexico, 87131, USA}
\email{tloring@unm.edu}
\begin{abstract}
We look at explicit ways to bring one or two antiunitary symmetries
into a standard form via unitary conjugation. We carefully reproduce
Wigner's proof in two special cases, where the antiunitary operators
square to $+I$, or to $-I$. Wigner's method is constructive and
we show how leads to two algorithms to compute the needed unitaries
in small examples. 

We then adapt these algorithms to deal with two such antiunitary matrices
that commute up to a sign. This leads to a proof a finite-dimensional
version of the well-known ten-fold way of topological physics. This
will allow physicists to perform a change of basis on any finite-dimensional
model in one of the Altland-Zirnbauer symmetry classes to a standardized
version of that symmetry class in which time-reversal and and particle-hole
symmetry are standard operations that can be implemented efficiently
in standard numerical software. This will also make it easier to find
formulas for some key isomorphisms between graded real $C^{*}$-algebras.
\end{abstract}

\subjclass[2000]{47B02, 47B93}
\keywords{Wigner's Theorem, the tenfold way, antiunitary, symmetry}
\maketitle

\section{Introduction}

When done on Hilbert space, some physics calculations are basis dependent.
A subtle concept like fermionic parity \cite{grabsch2019pfaffian}
depends on which basis is used, but even more, depends of the ordering
of the basis. Another subtle point is that antilinear operators do
not behave in a familiar way when we do a change of basis. When working
by hand, we often can just use the existence of a satisfactory basis.
For example, if an antiunitary $\mathcal{U}$ squares to $-I$ we
can use Kramer's degeneracy to show that there is always an orthogonal
basis that put $\mathcal{U}$ into a standard form. A good basis for
more general antiunitary symmetries are known to exist, by a classic
result of Wigner \cite{wigner1960normal}. This leads to algorithms,
not much more difficult than the Gram-Schmidt algorithm, that can
quickly find such a basis. 

We will use mathematics notation. Most critically, $M^{*}$ denotes
the complex transpose, while $\overline{\text{M}}$ denotes complex
conjugation of each matrix element. We can describe then the transpose
operation, using the formula
\[
M^{\top}\mathbf{v}=\overline{\left(M^{*}\overline{\mathbf{v}}\right)}.
\]

Mathematicians often avoid work directly with antiunitary symmetries.
Instead they look at operations similar to the transpose that operate
on $C^{*}$-algebras. One such can be added to it structure to make
what is called a real $C^{*}$-algebra. Two such, that commute, lead
to a graded, real $C^{*}$-algebra \cite{el-kaioum2014graded}.

One can look at any antiunitary symmetry as adding a real structure
to a $C^{*}$-algebra, here just the $C^{*}$-algebra of all complex
$n$-by-$n$ matrices. Often, working at the more abstract level of
real $C^{*}$-algebras is much easier than working with antilinear
operator. However, there are some explicit isomorphisms that are found
more easily by looking at the underlying antiunitary symmetries.

It is when doing computer calculations that we really due have to
pick a good basis. For example, linear algebra software will have
versions of operations that are optimized for real matrices. If we
work with matrices that would only become real after applying a change
of basis, we lose this optimization. Even worse, numerical errors
can make our matrices have only an approximate reality condition.
Storing ``real matrices'' as actual real matrices is easier, faster,
more accurate and will use less memory. 

Physicists frequently need to consider having one or two order-two
antiunitaries that square to one, up to sign, and that commute, up
to sign. This comes up regarding the single-particle Hamiltonians
that are in specific Altland-Zirnbauer symmetry classes \cite{altland1997nonstandard,kaufmann2015topological}.
In the traditional setting for topological insulators these are $\mathcal{T}$
and $\mathcal{C}$, with the first time-reversal and the second particle-charge
conjugation. In practice, $\mathcal{T}$ can end up as a physical
rotation composed with actual time-reversal. The ten-fold way tells
us that such pairs, for a given choice of the signs, are unique up
to conjugation by a single unitary operator. Here we focus on finding
these unitaries, or equivalently, finding an orthonormal basis on
which these antiunitary operators acts in a simple, canonical manner.

There are many subtle points about when various symmetries need to
unitary or antiunitary. For example, it matters if one is working
in first quantization or second quantization \cite{Zirnbauer2021Particle-Hole-symmetries}.
We are ignoring these issues and focus on how to deal with symmetries
implemented by one or two antiunitary operators. The basic outline
of this paper follows the methods used to classify elementary graded
real $C^{*}$-algebras \cite{el-kaioum2014graded}. 

Recall that an operator $\mathcal{J}:\mathfrak{H}\rightarrow\mathfrak{H}$
is \emph{antilinear} if 
\[
\mathcal{J}(\alpha\mathbf{v}+\beta\mathbf{w})=\overline{\alpha}\mathcal{J}(\mathbf{v})+\overline{\beta}\mathcal{J}(\mathbf{w}).
\]
The cannonical example here is $\mathcal{K}:\mathbb{C}^{n}\rightarrow\mathbb{C}^{n}$.
Since the product of two antilinear operators is linear, we can always
factor $\mathcal{J}$ as $\mathcal{J}=T\circ\mathcal{K}$ with $T$
linear. When $\mathfrak{H}=\mathbb{C}^{n}$ we want to associate an
$n$-by-$n$ matrix $M$ with its associated linear operator $\mathbf{v}\mapsto M\mathbf{v}$.
We use the calligraphic font to denote antilinear operators, and use
the composition symbol. With these conventions, we find 
\[
M\circ\mathcal{K}\circ N\circ\mathcal{K}=M\circ\overline{N}=M\overline{N}.
\]
Note that in the physics literature, the word operator can sometime
indicate an antilinear operator. 

An operator $\mathcal{T}:\mathfrak{H}\rightarrow\mathfrak{H}$ is
\emph{antiunitary} if it is antilinear, onto, and if
\[
\left\langle \mathcal{T}\left(\mathbf{v}\right),\mathcal{T}\left(\mathbf{w}\right)\right\rangle =\overline{\left\langle \mathbf{v},\mathbf{w}\right\rangle }
\]
for all vectors. In the standard factoring $\mathcal{T}=U\circ\mathcal{K}$
we will have $U$ a unitary operator. Wigner, in 1960, proved there
is a unitary $Q$ that conjugates $\mathcal{T}$ to a simpler antiunitary
in a prescribed form. That is, 
\[
Q^{*}\circ\mathcal{T}\circ Q=\mathcal{T}_{0}
\]
where there are orthogonal subspaces left invariant by $\mathcal{T}_{0}$,
and on each subspace the action is a copy of one of the $\mathcal{W}_{\theta}$
we now describe. For a random $\mathcal{T}$, most of the basic blocks
will be a copy of $\mathcal{W}_{\theta}$, the antiunitary that acts
as
\[
\mathcal{W}_{\theta}\left(\left[\begin{array}{c}
\alpha\\
\beta
\end{array}\right]\right)=\left[\begin{array}{c}
e^{i\theta}\overline{\beta}\\
\overline{\alpha}
\end{array}\right].
\]
The restriction on $\theta$ is $0\leq\theta\leq\pi$, with $\theta=0$
being a special case. In this special case we define $\mathcal{W}_{0}$
to act on $\mathbb{C}$ as
\[
\mathcal{W}_{0}\left(\alpha\right)=\overline{\alpha}.
\]
The one-by-one blocks have action that is just complex conjugation.
\begin{rem}
Wigner finds blocks that act like 
\[
\widehat{\mathcal{W}}_{\theta}\left(\left[\begin{array}{c}
\alpha\\
\beta
\end{array}\right]\right)=\left[\begin{array}{c}
e^{\frac{i\theta}{2}}\overline{\beta}\\
e^{-\frac{i\theta}{2}}\overline{\alpha}
\end{array}\right].
\]
The unitary $Q_{\theta}$ defined by
\[
Q_{\theta}\left(\left[\begin{array}{c}
\alpha\\
\beta
\end{array}\right]\right)=\left[\begin{array}{c}
\beta\\
e^{\frac{i}{2}\theta}\alpha
\end{array}\right]
\]
intertwines this with $\mathcal{W}_{\theta}$, as one can check $Q_{\theta}\circ\widehat{\mathcal{W}}_{\theta}=\mathcal{W}_{\theta}\circ Q_{\theta}$.
One can use $\mathcal{W}_{\theta}$ or $\widehat{\mathcal{W}}_{\theta}$
as one prefers.
\end{rem}

The only symmetries allowed in a framework like the ten-fold way are
those that square to $\pm I$. So here we will only find all the $\theta$
are zero or all are $\pi$. 
\begin{defn}
Let $\mathcal{T}$ be an antiunitary operator. The order of $\mathcal{T}$
is the smallest $n$ for which $\mathcal{T}^{n}=\pm I$ for some real
$\theta$. If there is none such, then this is of infinite order.
An antiunitary operator $\mathcal{T}$ of order two is \emph{even}
if $\mathcal{T}\circ\mathcal{T}=I$ and \emph{odd} if $\mathcal{T}\circ\mathcal{T}=-I$.
\end{defn}

\section{The structure of even antiunitary operators}

So now we assume $\mathcal{T}=U\circ\mathcal{K}$ and $\mathcal{T}\circ\mathcal{T}=+I$.
Wigner's theorem will guarantee, in this special case, the existence
of unitary matrix $Q$ so that
\begin{equation}
Q^{*}\circ\mathcal{T}\circ Q=\mathcal{K}.\label{eq:conugation_to-conjugation}
\end{equation}
If we turn this into a problem on the unitary matrix $U$ we find
\begin{align*}
I\circ\mathcal{K} & =Q^{*}\circ\mathcal{T}\circ Q\\
 & =Q^{*}\circ U\circ\mathcal{K}\circ Q\\
 & =Q^{*}\circ U\circ\overline{Q}\circ\mathcal{K}.
\end{align*}
We cancel the $\mathcal{K}$ (since its one-to-one) and find $I=Q^{*}U\overline{Q}$
or 
\[
QQ^{\top}=U.
\]
In some cases one can deduce a value for $Q$ just from looking at
this and the equation $QQ^{*}=I$. A more systematic approach is to
follow Wigner's proof from \cite{wigner1960normal}. 

Wigner's proof simplifies when we assume $\mathcal{T}\circ\mathcal{T}=I$.
We deal with the case $\mathcal{T}\circ\mathcal{T}=-I$ in the next
section. The part of Wigner's proof we follow in this section can
be found roughly in Equations 12 through 18 in \cite{wigner1960normal}.
Let us state this special form of the theorem in terms of an orthogonal
basis. These basis elements will be the columns of $Q$ to get the
conjugation to standard form shown in Equation~\ref{eq:conugation_to-conjugation}.

Proposition~\ref{prop:Even_antiunitary_structure} can be used in
class D as well as class AI. The only difference is that we call the
one symmetry $\mathcal{C}$.
\begin{prop}
\label{prop:Even_antiunitary_structure} Suppose $\mathcal{T}$ is
an antiunitary operator on finite-dimensional Hilbert space such that
\[
\mathcal{T}\circ\mathcal{T}=+I.
\]
Then there is an orthonormal basis $\boldsymbol{\mathbf{\psi}}_{1},\dots,\boldsymbol{\mathbf{\psi}}_{n}$
such that 
\[
\mathcal{T}\left(\boldsymbol{\mathbf{\psi}}_{j}\right)=\boldsymbol{\mathbf{\psi}}_{j}
\]
for all $j$.
\end{prop}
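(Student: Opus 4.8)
The plan is to proceed by induction on the dimension $n$, peeling off one fixed unit vector at a time, exactly in the spirit of Gram--Schmidt. The base case $n=1$ is handled by the fixed-vector construction below, and the inductive step has two ingredients: first produce a single nonzero vector fixed by $\mathcal{T}$ and normalize it to obtain $\boldsymbol{\psi}_1$; then show that its orthogonal complement is again $\mathcal{T}$-invariant, so that the restriction of $\mathcal{T}$ there is an antiunitary squaring to $+I$ on an $(n-1)$-dimensional space, to which the inductive hypothesis applies.

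The first key step is the construction of a fixed vector. Starting from any nonzero $\mathbf{v}$, I would symmetrize by setting $\mathbf{w}=\mathbf{v}+\mathcal{T}(\mathbf{v})$. Using $\mathcal{T}\circ\mathcal{T}=I$ together with antilinearity, one gets $\mathcal{T}(\mathbf{w})=\mathcal{T}(\mathbf{v})+\mathbf{v}=\mathbf{w}$, so $\mathbf{w}$ is fixed whenever it is nonzero. The only obstruction is $\mathbf{w}=0$, i.e. $\mathcal{T}(\mathbf{v})=-\mathbf{v}$; in that case I would instead examine $i\mathbf{v}$, for which $\mathcal{T}(i\mathbf{v})=\overline{i}\,\mathcal{T}(\mathbf{v})=(-i)(-\mathbf{v})=i\mathbf{v}$, so $i\mathbf{v}$ is a nonzero fixed vector. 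Thus every nonzero vector yields a nonzero fixed vector, and this is precisely where the hypothesis $\mathcal{T}\circ\mathcal{T}=+I$ is genuinely used.

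The second key step is invariance of the orthogonal complement. If $\boldsymbol{\psi}_1$ is fixed and $\mathbf{x}\perp\boldsymbol{\psi}_1$, then the defining antiunitary identity gives $\langle\mathcal{T}(\mathbf{x}),\boldsymbol{\psi}_1\rangle=\langle\mathcal{T}(\mathbf{x}),\mathcal{T}(\boldsymbol{\psi}_1)\rangle=\overline{\langle\mathbf{x},\boldsymbol{\psi}_1\rangle}=0$, so $\mathcal{T}$ maps $\boldsymbol{\psi}_1^{\perp}$ into itself. Since $\mathcal{T}$ is an injective antiunitary squaring to $+I$, its restriction to this finite-dimensional invariant subspace is again an antiunitary squaring to $+I$, and the induction delivers the remaining $\boldsymbol{\psi}_2,\dots,\boldsymbol{\psi}_n$, orthonormal and fixed.

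I expect the main (and essentially only) obstacle to be the degenerate case $\mathbf{v}+\mathcal{T}(\mathbf{v})=0$ in the first step: one must notice that multiplying by $i$ reverses the sign introduced by antilinearity and so rescues a fixed vector. This is exactly the feature that fails when $\mathcal{T}\circ\mathcal{T}=-I$, where no nonzero fixed vectors exist and Kramers degeneracy instead forces the two-dimensional blocks treated in the next section.
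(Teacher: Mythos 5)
Your proposal is correct and is essentially the paper's own argument: the same symmetrization $\mathbf{v}+\mathcal{T}(\mathbf{v})$ with the multiplication-by-$i$ rescue in the degenerate case $\mathcal{T}(\mathbf{v})=-\mathbf{v}$, which is exactly where the hypothesis $\mathcal{T}\circ\mathcal{T}=+I$ enters. The only difference is packaging: you run an induction on dimension via invariance of $\boldsymbol{\psi}_1^{\perp}$, whereas the paper iterates the construction Gram--Schmidt style, choosing each new seed vector orthogonal to the previously built $\boldsymbol{\psi}_j$; your invariance computation $\langle\mathcal{T}(\mathbf{x}),\boldsymbol{\psi}_1\rangle=\overline{\langle\mathbf{x},\boldsymbol{\psi}_1\rangle}=0$ is precisely the justification the paper compresses into the remark that $\mathcal{T}$ and the vector operations ``preserve this orthogonality.''
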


\begin{proof}
Take any unit vector $\boldsymbol{\varphi}_{1}$. Then
\[
\mathcal{T}\left(\boldsymbol{\varphi}_{1}+\mathcal{T}\left(\boldsymbol{\varphi}_{1}\right)\right)=\boldsymbol{\varphi}_{1}+\mathcal{T}\left(\boldsymbol{\varphi}_{1}\right)
\]
since $\mathcal{T}\left(\mathcal{T}\left(\mathbf{v}\right)\right)=\mathbf{v}$
for any vector. We now create our first basis vector $\boldsymbol{\mathbf{\psi}}_{1}$
in one of two ways. If $\mathcal{T}\left(\boldsymbol{\varphi}_{1}\right)\neq-\boldsymbol{\varphi}_{1}$
then we set
\[
\boldsymbol{\psi}_{1}=\frac{\boldsymbol{\varphi}_{1}+\mathcal{T}\left(\boldsymbol{\varphi}_{1}\right)}{\left\Vert \boldsymbol{\varphi}_{1}+\mathcal{T}\left(\boldsymbol{\varphi}_{1}\right)\right\Vert }.
\]
If $\mathcal{T}\left(\boldsymbol{\varphi}_{1}\right)=-\boldsymbol{\varphi}_{1}$
then we set 
\[
\boldsymbol{\psi}_{1}=i\boldsymbol{\varphi}_{1}.
\]
In both cases, we now check that $\mathcal{T}(\boldsymbol{\psi}_{1})=\boldsymbol{\psi}_{1}$.
In the first case we have 
\begin{align*}
\mathcal{T}\left(\frac{\boldsymbol{\varphi}_{1}+\mathcal{T}\left(\boldsymbol{\varphi}_{1}\right)}{\left\Vert \boldsymbol{\varphi}_{1}+\mathcal{T}\left(\boldsymbol{\varphi}_{1}\right)\right\Vert }\right) & =\frac{\mathcal{T}\left(\boldsymbol{\varphi}_{1}+\mathcal{T}\left(\boldsymbol{\varphi}_{1}\right)\right)}{\left\Vert \boldsymbol{\varphi}_{1}+\mathcal{T}\left(\boldsymbol{\varphi}\right)\right\Vert }\\
 & =\frac{\boldsymbol{\varphi}_{1}+\mathcal{T}\left(\boldsymbol{\varphi}_{1}\right)}{\left\Vert \boldsymbol{\varphi}_{1}+\mathcal{T}\left(\boldsymbol{\varphi}_{1}\right)\right\Vert }
\end{align*}
while in the second case (keeping in mind that $\mathcal{T}$ is antilinear)
we have
\[
\mathcal{T}\left(i\boldsymbol{\varphi}_{1}\right)=-i\mathcal{T}\left(\boldsymbol{\boldsymbol{\varphi}}\right)=-i\left(-\boldsymbol{\boldsymbol{\varphi}}\right)=i\boldsymbol{\varphi}_{1}.
\]

Now assume $\boldsymbol{\psi}_{1},\cdots,\boldsymbol{\psi}_{k-1}$
have been selected, with $k$ not more than the dimension of the Hilbert
space. Pick any unit vector $\boldsymbol{\phi}_{k}$ that is orthogonal
to each of these $k-1$ vectors. We now create $\mathbf{\boldsymbol{\psi}}_{k}$
in one of two ways. If $\mathcal{T}\left(\boldsymbol{\varphi}_{k}\right)\neq-\boldsymbol{\varphi}_{k}$
we set 
\[
\boldsymbol{\psi}_{k}=\frac{\boldsymbol{\varphi}_{k}+\mathcal{T}\left(\boldsymbol{\varphi}_{k}\right)}{\left\Vert \boldsymbol{\varphi}_{k}+\mathcal{T}\left(\boldsymbol{\varphi}_{k}\right)\right\Vert }.
\]
If $\mathcal{T}\left(\boldsymbol{\varphi}_{k}\right)=-\boldsymbol{\varphi}_{k}$
we set 
\[
\boldsymbol{\psi}_{k}=i\boldsymbol{\varphi}_{k}.
\]
In all these formulas we retain orthogonality to the $\boldsymbol{\psi}_{j}$
for $j<k$, since vector addition, the antiunitary $\mathcal{T}$,
and scalar multiplication all preserve this orthogonality. Thus $\boldsymbol{\psi}_{k+1}$
will be a unit vector that is orthogonal to the proceeding $\boldsymbol{\psi}_{j}$
so the construction may proceed.
\end{proof}
If $\mathcal{T}=U\circ\mathcal{K}$ for $U$ some unitary operator
we can just state this all in terms of $U$ and the conjugation of
vectors. Just for clarity, this says $\mathcal{T}\left(\boldsymbol{v}\right)=U\overline{\boldsymbol{v}}.$ 

We want at the end to have 
\[
U\overline{\boldsymbol{\psi}_{j}}=\boldsymbol{\psi}_{j}.
\]
From $\boldsymbol{\varphi}_{k}$ we now describe $\boldsymbol{\psi}_{k}$
in terms of $U$ and conjugation. If $U\overline{\boldsymbol{\varphi}_{k}}\neq-\boldsymbol{\varphi}_{k}$
then 
\[
\boldsymbol{\psi}_{k}=\frac{\boldsymbol{\varphi}_{k}+U\overline{\boldsymbol{\varphi}_{k}}}{\left\Vert \boldsymbol{\varphi}_{k}+U\overline{\boldsymbol{\varphi}_{k}}\right\Vert }.
\]
 If $U\overline{\boldsymbol{\varphi}_{k}}\neq\boldsymbol{\varphi}_{k}$
then 
\[
\boldsymbol{\psi}_{k}=i\boldsymbol{\varphi}_{k}.
\]

\section{The structure of a odd antiunitary operators}
\begin{lem}
Suppose $\mathcal{T}$ is an antiunitary operator such that $\mathcal{T}\circ\mathcal{T}=-I$.
If $\boldsymbol{\varphi}$ is any vector then $\mathcal{T}(\boldsymbol{\varphi})$
will be orthogonal to $\boldsymbol{\varphi}$.
\end{lem}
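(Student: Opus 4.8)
The plan is to compute the inner product $\langle \boldsymbol{\varphi}, \mathcal{T}(\boldsymbol{\varphi})\rangle$ and show it must vanish. The key tool is the defining property of an antiunitary operator, namely that $\langle \mathcal{T}(\mathbf{v}), \mathcal{T}(\mathbf{w})\rangle = \overline{\langle \mathbf{v}, \mathbf{w}\rangle}$, combined with the hypothesis $\mathcal{T}\circ\mathcal{T} = -I$, which lets me replace $\mathcal{T}(\mathcal{T}(\boldsymbol{\varphi}))$ by $-\boldsymbol{\varphi}$.

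First I would apply the antiunitary relation to the pair $\mathbf{v} = \boldsymbol{\varphi}$ and $\mathbf{w} = \mathcal{T}(\boldsymbol{\varphi})$. This gives
\[
\left\langle \mathcal{T}(\boldsymbol{\varphi}),\, \mathcal{T}(\mathcal{T}(\boldsymbol{\varphi})) \right\rangle = \overline{\left\langle \boldsymbol{\varphi},\, \mathcal{T}(\boldsymbol{\varphi}) \right\rangle}.
\]
Next I would use $\mathcal{T}(\mathcal{T}(\boldsymbol{\varphi})) = -\boldsymbol{\varphi}$ on the left side, and pull the scalar $-1$ out of the second slot of the inner product, so the left side becomes $-\langle \mathcal{T}(\boldsymbol{\varphi}), \boldsymbol{\varphi}\rangle$. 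I then recognize $\langle \mathcal{T}(\boldsymbol{\varphi}), \boldsymbol{\varphi}\rangle$ as the complex conjugate of $\langle \boldsymbol{\varphi}, \mathcal{T}(\boldsymbol{\varphi})\rangle$ by the Hermitian symmetry of the inner product. Writing $z = \langle \boldsymbol{\varphi}, \mathcal{T}(\boldsymbol{\varphi})\rangle$, the identity collapses to $-\overline{z} = \overline{z}$, forcing $\overline{z} = 0$ and hence $z = 0$, which is exactly orthogonality.

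The only genuine subtlety, and the step I would be most careful about, is the bookkeeping of where the complex conjugation and the minus sign land, since with an antilinear operator one must be attentive to whether a scalar sits in the first or second argument of the inner product and whether the antiunitary relation conjugates the entire bracket. I would fix a convention (say, the inner product is conjugate-linear in its first argument, matching the factoring $\mathcal{T}(\mathbf{v}) = U\overline{\mathbf{v}}$ used in the preceding section) and verify consistency by checking the manipulation against that convention. Beyond this bookkeeping there is no real obstacle: the result is a short two-line identity once the antiunitary property and $\mathcal{T}\circ\mathcal{T} = -I$ are substituted.
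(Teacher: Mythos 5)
Your proof is correct and follows essentially the same route as the paper: apply the antiunitary identity to the pair $\boldsymbol{\varphi}$, $\mathcal{T}(\boldsymbol{\varphi})$, substitute $\mathcal{T}(\mathcal{T}(\boldsymbol{\varphi}))=-\boldsymbol{\varphi}$, and conclude that the inner product equals its own negative. The only difference is cosmetic: the paper uses the equivalent form $\left\langle \mathcal{T}(\mathbf{v}),\mathcal{T}(\mathbf{w})\right\rangle =\left\langle \mathbf{w},\mathbf{v}\right\rangle$ and gets $z=-z$ directly, while you keep the conjugated form and arrive at $-\overline{z}=\overline{z}$, which is the same conclusion.
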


\begin{proof}
Recall the property $\left\langle \mathcal{T}\left(\boldsymbol{\varphi}\right),\mathcal{T}\left(\boldsymbol{\psi}\right)\right\rangle =\left\langle \boldsymbol{\psi},\boldsymbol{\varphi}\right\rangle $
of all antiunitary operators. We now have $\mathcal{T}\left(\mathcal{T}\left(\mathbf{v}\right)\right)=-\mathbf{v}$
and so find
\begin{align*}
\left\langle \boldsymbol{\varphi},\mathcal{T}\left(\boldsymbol{\varphi}\right)\right\rangle  & =\left\langle \mathcal{T}\left(\mathcal{T}\left(\boldsymbol{\varphi}\right)\right),\mathcal{T}\left(\boldsymbol{\varphi}\right)\right\rangle \\
 & =\left\langle -\boldsymbol{\varphi},\mathcal{T}\left(\boldsymbol{\varphi}\right)\right\rangle \\
 & =-\left\langle \boldsymbol{\varphi},\mathcal{T}\left(\boldsymbol{\varphi}\right)\right\rangle 
\end{align*}
and so $\left\langle \boldsymbol{\varphi},\mathcal{T}\left(\boldsymbol{\varphi}\right)\right\rangle =0$. 
\end{proof}
In this special case, Wigner's theorem shows the existence of unitary
matrix $Q$ so that
\begin{equation}
Q\circ\mathcal{T}\circ Q^{*}=\mathcal{T}_{0}\label{eq:conugation_to-conjugation-1}
\end{equation}
where $\mathcal{T}_{0}=U_{0}\circ\mathcal{K}$ and $U_{0}$ is just
copies of $i\sigma_{y}$. That is
\[
U_{0}=\left[\begin{array}{ccccc}
0 & 1\\
-1 & 0\\
 &  & 0 & 1\\
 &  & -1 & 0\\
 &  &  &  & \ddots
\end{array}\right].
\]
We follow Wigner, equations (21) to (24), in the proof of the following.
\begin{prop}
\label{prop:Odd_antiunitary_structure} Suppose $\mathcal{T}$ is
an antiunitary operator on finite-dimensional Hilbert space such that
\[
\mathcal{T}\circ\mathcal{T}=-I.
\]
Then there is an orthonormal basis whose vectors come in pairs $\boldsymbol{\phi}_{j},\boldsymbol{\psi}_{j}$
with the property that 
\[
\mathcal{T}\left(\boldsymbol{\varphi}_{j}\right)=-\boldsymbol{\psi}_{j},\ \mathcal{T}\left(\boldsymbol{\psi}_{j}\right)=\boldsymbol{\varphi}_{j}.
\]
 
\end{prop}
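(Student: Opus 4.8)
The plan is to imitate the inductive construction used for Proposition~\ref{prop:Even_antiunitary_structure}, but now building the orthonormal basis two vectors at a time, using the preceding Lemma to guarantee that each pair genuinely spans a two-dimensional subspace.

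First I would observe that a single pair can be produced from any unit vector. Given a unit vector $\boldsymbol{\varphi}$, set $\boldsymbol{\psi}=-\mathcal{T}(\boldsymbol{\varphi})$. Antilinearity and $\mathcal{T}\circ\mathcal{T}=-I$ then give $\mathcal{T}(\boldsymbol{\psi})=-\mathcal{T}(\mathcal{T}(\boldsymbol{\varphi}))=\boldsymbol{\varphi}$, so the two relations $\mathcal{T}(\boldsymbol{\varphi})=-\boldsymbol{\psi}$ and $\mathcal{T}(\boldsymbol{\psi})=\boldsymbol{\varphi}$ hold automatically. Because $\mathcal{T}$ preserves norms, $\boldsymbol{\psi}$ is again a unit vector, and by the preceding Lemma it is orthogonal to $\boldsymbol{\varphi}$. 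Thus $\{\boldsymbol{\varphi},\boldsymbol{\psi}\}$ is always an orthonormal pair of the required form.

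Next comes the inductive step. Suppose orthonormal pairs $\boldsymbol{\varphi}_1,\boldsymbol{\psi}_1,\dots,\boldsymbol{\varphi}_{k-1},\boldsymbol{\psi}_{k-1}$ have been chosen, and let $V$ denote their span. Since $\mathcal{T}(\boldsymbol{\varphi}_j)=-\boldsymbol{\psi}_j$ and $\mathcal{T}(\boldsymbol{\psi}_j)=\boldsymbol{\varphi}_j$ lie in $V$, and $\mathcal{T}$ carries combinations to combinations, $V$ is $\mathcal{T}$-invariant. If $V\neq\mathfrak{H}$, I would pick any unit vector $\boldsymbol{\varphi}_k$ in $V^{\perp}$ and set $\boldsymbol{\psi}_k=-\mathcal{T}(\boldsymbol{\varphi}_k)$; by the first step this already satisfies the two intertwining relations, is a unit vector, and is orthogonal to $\boldsymbol{\varphi}_k$.

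The key step, and the one I expect to be the main obstacle, is checking that $\boldsymbol{\psi}_k$ remains orthogonal to all the previously chosen vectors, that is, that $V^{\perp}$ is itself $\mathcal{T}$-invariant. Here I would invoke the antiunitarity identity $\left\langle\mathcal{T}(\boldsymbol{\varphi}),\mathcal{T}(\boldsymbol{\psi})\right\rangle=\left\langle\boldsymbol{\psi},\boldsymbol{\varphi}\right\rangle$ recalled in the Lemma's proof. Given any $\mathbf{v}\in V$, the invariance of $V$ together with $\mathcal{T}\circ\mathcal{T}=-I$ lets me write $\mathbf{v}=\mathcal{T}(\mathbf{u})$ with $\mathbf{u}=-\mathcal{T}(\mathbf{v})\in V$, so that for $\boldsymbol{\varphi}_k\in V^{\perp}$,
\[
\left\langle\mathcal{T}(\boldsymbol{\varphi}_k),\mathbf{v}\right\rangle=\left\langle\mathcal{T}(\boldsymbol{\varphi}_k),\mathcal{T}(\mathbf{u})\right\rangle=\left\langle\mathbf{u},\boldsymbol{\varphi}_k\right\rangle=0.
\]
Hence $\boldsymbol{\psi}_k=-\mathcal{T}(\boldsymbol{\varphi}_k)\in V^{\perp}$, and the enlarged family is again orthonormal. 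Since each stage enlarges $V$ by the two-dimensional span of a new pair, the process terminates with $V=\mathfrak{H}$; moreover it can never stall at a one-dimensional $V^{\perp}$, because a unit vector there and its image under $-\mathcal{T}$ would be orthonormal and hence span a two-dimensional subspace of $V^{\perp}$. This both completes the basis and records, as a byproduct, that the dimension must be even.
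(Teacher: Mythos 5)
Your proposal is correct and takes essentially the same route as the paper: the same pairing $\boldsymbol{\psi}_{k}=-\mathcal{T}\left(\boldsymbol{\varphi}_{k}\right)$, the same use of the preceding Lemma for orthogonality within a pair, and the same antiunitarity identity $\left\langle \mathcal{T}\left(\mathbf{a}\right),\mathcal{T}\left(\mathbf{b}\right)\right\rangle =\left\langle \mathbf{b},\mathbf{a}\right\rangle$ for orthogonality to the earlier vectors, merely repackaged as $\mathcal{T}$-invariance of $V$ and $V^{\perp}$ instead of the paper's pairwise checks. Your only addition is the explicit observation that the construction cannot stall on a one-dimensional complement, so the dimension must be even --- a point the paper leaves implicit.
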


\begin{proof}
We pick $\boldsymbol{\varphi}_{1}$ as any unit vector, then simply
set 
\[
\boldsymbol{\psi}_{1}=-\mathcal{T}\left(\boldsymbol{\varphi}_{1}\right).
\]
This is also a unit vector, and is orthogonal to $\boldsymbol{\varphi}_{1}$
by the above lemma. If we apply $\mathcal{T}$ to this equation we
get
\[
\mathcal{T}\left(\boldsymbol{\psi}_{1}\right)=\mathcal{T}\left(-\mathcal{T}\left(\boldsymbol{\varphi}_{1}\right)\right)=-\mathcal{T}\left(\mathcal{T}\left(\boldsymbol{\varphi}_{1}\right)\right)=\boldsymbol{\varphi}_{1}.
\]

Assume $\boldsymbol{\varphi}_{1},\boldsymbol{\psi}_{1},\cdots,\boldsymbol{\varphi}_{j-1},\boldsymbol{\psi}_{j-1}$
have been selected. We select $\boldsymbol{\varphi}_{j}$ as any unit
vector orthogonal to all the previously defined basis elements. We
set 
\[
\boldsymbol{\psi}_{j}=\mathcal{T}\left(\boldsymbol{\varphi}_{j}\right).
\]
We need to check that $\boldsymbol{\psi}_{j}$ is orthogonal to $\boldsymbol{\varphi}_{k}$
and $\boldsymbol{\psi}_{k}$ for $k<j$. We find
\[
\left\langle \boldsymbol{\psi}_{j},\boldsymbol{\varphi}_{k}\right\rangle =\left\langle \mathcal{T}\left(\boldsymbol{\varphi}_{k}\right),\mathcal{T}\left(\boldsymbol{\psi}_{j}\right)\right\rangle =\left(-\boldsymbol{\psi}_{k},\boldsymbol{\varphi}_{j}\right)=0
\]
and 
\[
\left\langle \boldsymbol{\psi}_{j},\boldsymbol{\psi}_{k}\right\rangle =\left\langle \mathcal{T}\left(\boldsymbol{\psi}_{k}\right),\mathcal{T}\left(\boldsymbol{\psi}_{j}\right)\right\rangle =\left\langle \boldsymbol{\varphi}_{k},\boldsymbol{\varphi}_{j}\right\rangle =0.
\]
\end{proof}
If we want to work in terms of $U$ and $\mathcal{K}$, where $\mathcal{T}\mathbf{v}=U\overline{\mathbf{v}}$,
we see this is very simple process. We pick the $\boldsymbol{\varphi}_{j}$
by selecting a unit vector orthogonal to what has been already selected,
then just set $\boldsymbol{\psi}_{j}=-U\overline{\boldsymbol{\varphi}_{j}}$. 

\section{Time-reversal (anti)invariant observables}

Consider a matrix observable $M$ that is Hermitian and that commutes
or anticommutes with $\mathcal{T}=U\circ\mathcal{K}$. For now, we
put no restrictions on $\mathcal{T}$ beyond being anti-unitary. Our
assumption is thus
\[
\mathcal{T}\circ M=\pm M\circ\mathcal{T}.
\]
Some equivalent ways to state this are
\begin{align*}
\mathcal{T}\circ M\circ\mathcal{T}^{-1} & =\pm M
\end{align*}
and 
\begin{align*}
U\circ\mathcal{K}\circ M\circ\mathcal{K}\circ U^{*} & =\pm M
\end{align*}
and finally
\[
U\overline{M}U^{*}=\pm M.
\]
This is equivalent to 
\[
UM^{\top}U^{*}=\pm M
\]
since we have assumed $M^{*}=M$. 

First the case of an odd antiunitary symmetry.
\begin{prop}
Suppose $\mathcal{T}\circ\mathcal{T}=I$ for some antiunitary $\mathcal{T}$
and that $Q$ is a unitary such that $Q^{*}\circ\mathcal{T}\circ Q=\mathcal{K}$.
If $M$ is a Hermitian matrix such that 
\[
\mathcal{T}\circ M=\pm M\circ\mathcal{T}
\]
then $M_{Q}=Q^{*}MQ$ is a matrix such that 
\[
\mathcal{K}\circ M_{Q}=\pm M_{Q}\circ\mathcal{K}.
\]
This conclusion can be restated so that the conclusion is that either
$M_{Q}$ is real symmetric, in the commuting case, or purely-imaginary
and anti-symmetric in the anticommuting case.
\end{prop}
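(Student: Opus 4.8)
The plan is to push everything through the conjugation $Q^{*}\circ\mathcal{T}\circ Q=\mathcal{K}$ and only at the very end translate the resulting antilinear identity into a condition on the entries of $M_{Q}$. First I would record two easy consequences of the hypotheses. Left-multiplying the defining relation $Q^{*}\circ\mathcal{T}\circ Q=\mathcal{K}$ by $Q$ and using $QQ^{*}=I$ gives the intertwining form
\[
\mathcal{T}\circ Q=Q\circ\mathcal{K}.
\]
Second, since $Q$ is unitary and $M^{*}=M$, the matrix $M_{Q}=Q^{*}MQ$ is again Hermitian, $M_{Q}^{*}=Q^{*}M^{*}Q=M_{Q}$; I will need this in the last step.

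The main computation is then a direct substitution, with $M_{Q}$ viewed as the composition $Q^{*}\circ M\circ Q$ of linear operators:
\begin{align*}
\mathcal{K}\circ M_{Q} & =\left(Q^{*}\circ\mathcal{T}\circ Q\right)\circ\left(Q^{*}\circ M\circ Q\right)\\
 & =Q^{*}\circ\mathcal{T}\circ M\circ Q\\
 & =\pm\,Q^{*}\circ M\circ\mathcal{T}\circ Q\\
 & =\pm\,Q^{*}\circ M\circ Q\circ\mathcal{K}\\
 & =\pm\,M_{Q}\circ\mathcal{K},
\end{align*}
using $QQ^{*}=I$ in the second line, the hypothesis $\mathcal{T}\circ M=\pm M\circ\mathcal{T}$ in the third, and $\mathcal{T}\circ Q=Q\circ\mathcal{K}$ in the fourth. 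This establishes the first conclusion.

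For the restatement I would unwind the antilinear identity entrywise. Because $\mathcal{K}$ is coordinate conjugation, the operator $\mathcal{K}\circ M_{Q}$ sends $\mathbf{v}\mapsto\overline{M_{Q}\mathbf{v}}=\overline{M_{Q}}\,\overline{\mathbf{v}}$, whereas $M_{Q}\circ\mathcal{K}$ sends $\mathbf{v}\mapsto M_{Q}\overline{\mathbf{v}}$. As $\overline{\mathbf{v}}$ ranges over all vectors, the identity $\mathcal{K}\circ M_{Q}=\pm M_{Q}\circ\mathcal{K}$ is exactly $\overline{M_{Q}}=\pm M_{Q}$, so $M_{Q}$ has real entries in the commuting case and purely imaginary entries in the anticommuting case. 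Intersecting with Hermiticity finishes the job: from $M_{Q}^{*}=M_{Q}$ we get $M_{Q}^{\top}=\overline{M_{Q}}=\pm M_{Q}$, which reads as real symmetric when the sign is $+$ and purely imaginary antisymmetric when the sign is $-$.

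I do not expect a genuine obstacle here; the content is a three-line conjugation followed by reading off reality. The only place to be careful is the bookkeeping for the antilinear $\mathcal{K}$: one must remember that it is $\mathcal{T}\circ Q$ (not $Q\circ\mathcal{T}$) that equals $Q\circ\mathcal{K}$, and that composing $\mathcal{K}$ on the left conjugates the matrix it passes, so that $\overline{M_{Q}}=\pm M_{Q}$ rather than $M_{Q}^{\top}=\pm M_{Q}$ emerges before the Hermitian condition is applied.
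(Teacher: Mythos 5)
Your proof is correct and follows essentially the same route as the paper's: the identical chain of compositions, using the intertwining form $\mathcal{T}\circ Q=Q\circ\mathcal{K}$ to move $\mathcal{K}$ from the left of $M_{Q}$ to the right. You go slightly further than the paper by actually verifying the final restatement (that $\overline{M_{Q}}=\pm M_{Q}$ combined with $M_{Q}^{*}=M_{Q}$ gives real symmetric or purely imaginary antisymmetric), a step the paper's proof leaves implicit.
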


\begin{proof}
The equation $Q^{*}\circ\mathcal{T}\circ Q=\mathcal{K}$ can be rewritten
as either $Q^{*}\circ\mathcal{T}=\mathcal{K}\circ Q^{*}$ or $\mathcal{T}\circ Q=Q\circ\mathcal{K}$.
Using these equalities we find
\begin{align*}
\mathcal{K}\circ M_{Q} & =\mathcal{K}\circ Q^{*}\circ M\circ Q\\
 & =Q^{*}\circ\mathcal{T}\circ M\circ Q\\
 & =Q^{*}\circ\left(\pm M\circ\mathcal{T}\right)\circ Q\\
 & =\pm Q^{*}\circ M\circ\mathcal{T}\circ Q\\
 & =\pm Q^{*}\circ M\circ Q\circ\mathcal{K}\\
 & =\pm M_{Q}\circ\mathcal{K}.
\end{align*}
\end{proof}
Now the odd case.
\begin{prop}
Suppose $\mathcal{T}\circ\mathcal{T}=-I$ and $\mathcal{T}_{0}\circ\mathcal{T}_{0}=-I$
and for some antiunitaries $\mathcal{T}$ and $\mathcal{T}_{0}$.
Further suppose that $Q$ is a unitary such that $Q^{*}\circ\mathcal{T}\circ Q=\mathcal{T}_{0}$.
If $M$ is a Hermitian matrix such that 
\[
\mathcal{T}\circ M=\pm M\circ\mathcal{T}
\]
then $M_{Q}=Q^{*}MQ$ is a matrix such that 
\[
\mathcal{T}_{0}\circ M_{Q}=\pm M_{Q}\circ\mathcal{T}_{0}.
\]
\end{prop}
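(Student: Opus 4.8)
The plan is to mirror, almost verbatim, the argument just used in the preceding (commuting-case) proposition, since the only thing that proof actually exploited was that $\mathcal{T}$ and the target antiunitary are related by unitary conjugation---never the precise form of the target, nor that it squares to $+I$. First I would record two rewritings of the hypothesis $Q^{*}\circ\mathcal{T}\circ Q=\mathcal{T}_{0}$. Composing on the right by $Q^{*}$ and using $QQ^{*}=I$ yields $Q^{*}\circ\mathcal{T}=\mathcal{T}_{0}\circ Q^{*}$; composing instead on the left by $Q$ and again using $QQ^{*}=I$ yields $\mathcal{T}\circ Q=Q\circ\mathcal{T}_{0}$. Both manipulations compose only with the linear maps $Q$ and $Q^{*}$, so the antilinearity of $\mathcal{T}$ and $\mathcal{T}_{0}$ causes no trouble.

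With these two identities in hand, I would compute $\mathcal{T}_{0}\circ M_{Q}$ directly. Substituting $M_{Q}=Q^{*}MQ$ and pushing $\mathcal{T}_{0}$ past $Q^{*}$ via the first identity turns the leading factor $\mathcal{T}_{0}\circ Q^{*}$ into $Q^{*}\circ\mathcal{T}$. The hypothesis $\mathcal{T}\circ M=\pm M\circ\mathcal{T}$ then lets me move $\mathcal{T}$ through $M$ at the cost of the sign $\pm$, and finally the second identity $\mathcal{T}\circ Q=Q\circ\mathcal{T}_{0}$ reassembles the trailing factors into $Q\circ\mathcal{T}_{0}$, leaving $\pm Q^{*}MQ\circ\mathcal{T}_{0}=\pm M_{Q}\circ\mathcal{T}_{0}$, as desired. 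This is the same five-line chain as in the even case.

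I do not anticipate a genuine obstacle here, precisely because the argument never invokes the hypotheses $\mathcal{T}\circ\mathcal{T}=-I$ and $\mathcal{T}_{0}\circ\mathcal{T}_{0}=-I$; those conditions only serve, via Proposition~\ref{prop:Odd_antiunitary_structure}, to guarantee that a conjugating unitary $Q$ exists and that $\mathcal{T}_{0}$ has the standard $i\sigma_{y}$-block form. The one point to stay alert to is the bookkeeping of the sign: I must confirm that the $\pm$ factors out cleanly and is unaffected as it passes the linear operators $Q^{*}$ and $Q$, which it is, since $\pm 1$ is a real scalar and commutes with every operator in sight. In fact, the statement and its proof hold verbatim for \emph{any} antiunitary $\mathcal{T}_{0}$ unitarily conjugate to $\mathcal{T}$, so I would remark that the same computation simultaneously covers the even and odd regimes.
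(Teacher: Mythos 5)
Your proof is correct and takes essentially the same route as the paper's: the same two rewritings of the conjugation hypothesis, namely $Q^{*}\circ\mathcal{T}=\mathcal{T}_{0}\circ Q^{*}$ and $\mathcal{T}\circ Q=Q\circ\mathcal{T}_{0}$, followed by the same five-step chain of compositions, with the observation that the hypotheses $\mathcal{T}\circ\mathcal{T}=-I$ and $\mathcal{T}_{0}\circ\mathcal{T}_{0}=-I$ are never used. If anything, your derivation of the two intertwining identities is tidier than the paper's, which momentarily restates the hypothesis with $Q$ and $Q^{*}$ swapped before using the identities that follow from the statement as given.
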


\begin{proof}
The equation $Q\circ\mathcal{T}\circ Q^{*}=\mathcal{T}_{0}$ can be
rewritten as either $Q\circ\mathcal{T}=\mathcal{T}_{0}\circ Q$ or
$\mathcal{T}\circ Q^{*}=Q^{*}\circ\mathcal{T}_{0}$. Using these equalities
we find
\begin{align*}
\mathcal{T}_{0}\circ M_{Q} & =\mathcal{T}_{0}\circ Q^{*}\circ M\circ Q\\
 & =Q^{*}\circ\mathcal{T}\circ M\circ Q\\
 & =Q^{*}\circ\left(\pm M\circ\mathcal{T}\right)\circ Q\\
 & =\pm Q^{*}\circ M\circ\mathcal{T}\circ Q\\
 & =\pm Q^{*}\circ M\circ Q\circ\mathcal{T}_{0}\\
 & =\pm M_{Q}\circ\mathcal{T}_{0}.
\end{align*}
\end{proof}

\section{An antiunitary symmetry and a grading that commute}

In some Altland-Zirnbauer symmetry classes, we have a grading as well
as an antiunitary $\mathcal{T}:\mathfrak{H}\rightarrow\mathfrak{H}$
that commutes with the grading operator. The grading operator is a
unitary $\Pi$ with $\Pi^{2}=I$. If we start with the particle-charge
conjugation antiunitary $\mathcal{C}$ then the grading operator is
the unitary $\Pi=\mathcal{T}\circ\mathcal{\mathcal{C}}=\mathcal{\mathcal{C}}\circ\mathcal{T}$.
We need not assume that $\Pi$ is be balanced. Here balanced would
mean that the two eigenspaces
\begin{equation}
\mathfrak{H}_{+}=\left\{ \left.\boldsymbol{\varphi}\in\mathfrak{H}\right|\strut\Pi\boldsymbol{\varphi}=\boldsymbol{\varphi}\right\} \label{eq:eigenspace_for_plus_one}
\end{equation}
and
\begin{equation}
\mathfrak{H}_{-}=\left\{ \left.\boldsymbol{\varphi}\in\mathfrak{H}\right|\strut\Pi\boldsymbol{\varphi}=-\boldsymbol{\varphi}\right\} \label{eq:eigenspace_for_neg_one}
\end{equation}
would have to be of the same dimension.
\begin{lem}
Suppose $\mathcal{T}$ is an antisymmetry, and that $\Pi$ is a unitary
with $\Pi^{2}=I$. Moreover, assume $\mathcal{T}\circ\Pi=\Pi\circ\mathcal{T}$.
If $\Pi\mathbf{v}=\pm\mathbf{v}$ then $\Pi\mathcal{T}\left(\mathbf{v}\right)=\pm\mathcal{T}\left(\mathbf{v}\right)$.
\end{lem}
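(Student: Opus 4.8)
The plan is to read the conclusion directly off the commutation hypothesis, with the only genuine subtlety being the interaction between the antilinearity of $\mathcal{T}$ and the scalar $\pm1$. First I would unify the two cases by writing the hypothesis as $\Pi\mathbf{v}=\epsilon\mathbf{v}$ for a fixed sign $\epsilon\in\{+1,-1\}$, so that I only have to track a single real eigenvalue throughout rather than splitting into separate $+$ and $-$ arguments.

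Next I would compute $\Pi\mathcal{T}(\mathbf{v})$ by moving $\Pi$ inside $\mathcal{T}$ using the assumed identity $\mathcal{T}\circ\Pi=\Pi\circ\mathcal{T}$, obtaining
\[
\Pi\mathcal{T}(\mathbf{v})=\mathcal{T}(\Pi\mathbf{v})=\mathcal{T}(\epsilon\mathbf{v}).
\]
The step that needs care is the last one: since $\mathcal{T}$ is antilinear, pulling the scalar $\epsilon$ out of $\mathcal{T}(\epsilon\mathbf{v})$ yields $\overline{\epsilon}\,\mathcal{T}(\mathbf{v})$ rather than $\epsilon\,\mathcal{T}(\mathbf{v})$. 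I would then observe that the hypothesis $\Pi^{2}=I$ forces every eigenvalue of $\Pi$ to lie in $\{+1,-1\}\subset\mathbb{R}$, so $\overline{\epsilon}=\epsilon$ and the complex conjugation is harmless.

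The main, and really only, obstacle is recognizing that antilinearity would in general flip a complex eigenphase, and that it is precisely the reality of the spectrum of the involution $\Pi$ that rescues the argument; had $\Pi$ been an arbitrary unitary with complex eigenvalues, the analogous statement would fail. With that observation in place, chaining together the displayed equalities gives $\Pi\mathcal{T}(\mathbf{v})=\epsilon\mathcal{T}(\mathbf{v})$, which is exactly the claim that $\mathcal{T}(\mathbf{v})$ lies in the same $\epsilon$-eigenspace of $\Pi$ as $\mathbf{v}$. In particular this shows that $\mathcal{T}$ restricts to an antiunitary on each of the eigenspaces $\mathfrak{H}_{+}$ and $\mathfrak{H}_{-}$, which is the structural fact the later decomposition will rely on.
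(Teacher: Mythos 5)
Your proof is correct and takes essentially the same route as the paper's: use $\mathcal{T}\circ\Pi=\Pi\circ\mathcal{T}$ to write $\Pi\mathcal{T}(\mathbf{v})=\mathcal{T}(\Pi\mathbf{v})=\mathcal{T}(\pm\mathbf{v})=\pm\mathcal{T}(\mathbf{v})$. Your explicit observation that the final step survives antilinearity only because the eigenvalue $\pm1$ is real is a detail the paper leaves implicit, and it is a worthwhile clarification rather than a deviation.
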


\begin{proof}
We see 
\[
\Pi\mathcal{T}\left(\mathbf{v}\right)=\mathcal{T}\left(\Pi\mathbf{v}\right)=\mathcal{T}\left(\pm\mathbf{v}\right)=\pm\mathcal{T}\left(\mathbf{v}\right).
\]
\end{proof}
This means that in the setting where $\mathcal{T}\circ\Pi=\Pi\circ\mathcal{T}$
and $\mathcal{T}\circ\mathcal{T}=\pm I$ and $\Pi^{2}=I$, we can
restrict these to the eigenspaces of $\Pi$ and get two smaller antiunitary
maps 
\[
\mathcal{T}_{\pm}:\mathfrak{H}_{\pm}\rightarrow\mathfrak{H}_{\pm}.
\]
Both of these will square to $\pm I$, with the same sign as what
holds for $\mathcal{T}$. Thus we can just run the Wigner algorithms
in each sector separately. 

Proposition~\ref{prop:both_even} will apply to $\mathcal{T}$ and
$\mathcal{C}$ in class BDI.
\begin{prop}
\label{prop:both_even} Suppose $\mathcal{T}$ is an antiunitary operator
and $\Pi$ is a unitary on finite-dimensional Hilbert space, with
\[
\mathcal{T}\circ\mathcal{T}=+I,\ \Pi^{2}=I,\ \Pi\circ\mathcal{T}=\mathcal{T}\circ\Pi.
\]
Then there is an orthonormal basis $\boldsymbol{\psi}_{1},\dots,\boldsymbol{\psi}_{n}$
where every basis vector is invariant under $\mathcal{T}$ and for
some $m$ we have $\Pi\boldsymbol{\psi}_{j}=\boldsymbol{\psi}_{j}$
for $1\leq j\leq m$ and $\Pi\boldsymbol{\psi}_{j}=-\boldsymbol{\psi}_{j}$
for $m<j\leq n$.
\end{prop}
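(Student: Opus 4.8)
The plan is to reduce the statement to the single-symmetry case already handled in Proposition~\ref{prop:Even_antiunitary_structure} by splitting the Hilbert space along the grading. First I would record that since $\Pi$ is unitary with $\Pi^{2}=I$, it is self-adjoint, so $\mathfrak{H}$ decomposes as the orthogonal direct sum $\mathfrak{H}=\mathfrak{H}_{+}\oplus\mathfrak{H}_{-}$ of the eigenspaces defined in Equations~\ref{eq:eigenspace_for_plus_one} and~\ref{eq:eigenspace_for_neg_one}. The preceding lemma shows that $\mathcal{T}$ maps each eigenspace into itself, so the restrictions $\mathcal{T}_{\pm}:\mathfrak{H}_{\pm}\rightarrow\mathfrak{H}_{\pm}$ are well defined.

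Next I would check that each $\mathcal{T}_{\pm}$ is again an even antiunitary on its subspace. Antilinearity and the inner-product condition are inherited directly from $\mathcal{T}$. For surjectivity onto the subspace, note that the identity $\mathcal{T}\circ\mathcal{T}=I$ restricts, thanks to the invariance from the lemma, to $\mathcal{T}_{\pm}\circ\mathcal{T}_{\pm}=I$ on $\mathfrak{H}_{\pm}$; this exhibits $\mathcal{T}_{\pm}$ as its own inverse and in particular as onto. Hence each restriction satisfies the hypothesis of Proposition~\ref{prop:Even_antiunitary_structure}.

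Then I would apply that proposition in each sector separately. Running it on $\mathcal{T}_{+}$ produces an orthonormal basis $\boldsymbol{\psi}_{1},\dots,\boldsymbol{\psi}_{m}$ of $\mathfrak{H}_{+}$ with $\mathcal{T}(\boldsymbol{\psi}_{j})=\boldsymbol{\psi}_{j}$ and, because these vectors lie in $\mathfrak{H}_{+}$, with $\Pi\boldsymbol{\psi}_{j}=\boldsymbol{\psi}_{j}$. Running it on $\mathcal{T}_{-}$ produces an orthonormal basis $\boldsymbol{\psi}_{m+1},\dots,\boldsymbol{\psi}_{n}$ of $\mathfrak{H}_{-}$ with $\mathcal{T}(\boldsymbol{\psi}_{j})=\boldsymbol{\psi}_{j}$ and $\Pi\boldsymbol{\psi}_{j}=-\boldsymbol{\psi}_{j}$, where $m=\dim\mathfrak{H}_{+}$. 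Concatenating the two lists gives the claimed basis: it is orthonormal because $\mathfrak{H}_{+}\perp\mathfrak{H}_{-}$ while each piece is orthonormal within its sector, every vector is $\mathcal{T}$-invariant, and the grading sign is $+1$ on the first $m$ vectors and $-1$ on the rest.

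I do not expect a genuine obstacle here, since the mathematical content is entirely in the reduction and the two cited results then do all the work. The only point that needs a moment's care is the verification that $\mathcal{T}_{\pm}$ remains onto its \emph{own} eigenspace rather than merely injective; I would handle this via the self-inverse argument of the second paragraph, which avoids any separate appeal to a dimension count.
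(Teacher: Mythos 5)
Your proposal is correct and follows the same route as the paper: the paper's proof (supported by the lemma preceding Proposition~\ref{prop:both_even}) likewise restricts $\mathcal{T}$ to the $\Pi$-eigenspaces $\mathfrak{H}_{\pm}$ and runs the even Wigner construction of Proposition~\ref{prop:Even_antiunitary_structure} in each sector separately. Your write-up simply makes explicit the details the paper leaves implicit, such as the self-adjointness of $\Pi$ and the verification that the restrictions $\mathcal{T}_{\pm}$ are again even antiunitaries.
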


\begin{proof}
The proof is to just apply the Wigner's even construction to $\mathfrak{H}_{+}$
and again to $\mathfrak{H}_{-}$.
\end{proof}
Given $\mathcal{T}$ and $\Pi$, both squaring to $+I$, Proposition~\ref{prop:both_even}
means we can find a unitary $Q$ with 
\[
Q\circ\mathcal{T}\circ Q^{*}=\mathcal{K}
\]
and also 
\[
Q\Pi Q^{*}=\Pi_{0}
\]
where 
\[
\Pi_{0}=\left[\begin{array}{cc}
I_{m} & 0\\
0 & -I_{n-m}
\end{array}\right].
\]
If $\Pi$ is balanced, you might prefer to define $\Pi_{0}$ so it
is just copies of $\sigma_{z}$ down the diagonal. The key is to realize
you need to modify the grading and time-reversal operator at the same
time. 

Now a result for class CII. 
\begin{prop}
\label{prop:both_odd} uppose $\mathcal{T}$ is an antiunitary operator
and $\Pi$ is a unitary on finite-dimensional Hilbert space, with
\[
\mathcal{T}\circ\mathcal{T}=-I,\ \Pi^{2}=I,\ \Pi\circ\mathcal{T}=\mathcal{T}\circ\Pi.
\]
Then there is some $m$ and an orthonormal basis whose vectors come
in pairs $\boldsymbol{\phi}_{j},\boldsymbol{\psi}_{j}$ with the property
that $\mathcal{T}\left(\boldsymbol{\phi}_{j}\right)=-\boldsymbol{\psi}_{j}$
and $\mathcal{T}\left(\boldsymbol{\psi}_{j}\right)=\boldsymbol{\phi}_{j}$
and $\Pi\left(\boldsymbol{\phi}_{j}\right)=\boldsymbol{\phi}_{j}$
and $\Pi\left(\boldsymbol{\psi}_{j}\right)=\boldsymbol{\psi}_{j}$
for $j\leq m$ and $\Pi\left(\boldsymbol{\phi}_{j}\right)=-\boldsymbol{\phi}_{j}$
and $\Pi\left(\boldsymbol{\psi}_{j}\right)=-\boldsymbol{\psi}_{j}$
for $j>m$. 
\end{prop}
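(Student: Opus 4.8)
The plan is to mirror the one-line argument used for Proposition~\ref{prop:both_even}: decompose the Hilbert space into the two eigenspaces of $\Pi$ and run the odd Wigner construction of Proposition~\ref{prop:Odd_antiunitary_structure} separately in each sector. First I would write $\mathfrak{H}=\mathfrak{H}_{+}\oplus\mathfrak{H}_{-}$ as the orthogonal direct sum of the $+1$ and $-1$ eigenspaces of $\Pi$ defined in Equations~\ref{eq:eigenspace_for_plus_one} and~\ref{eq:eigenspace_for_neg_one}. This splitting is valid because $\Pi$ is a self-adjoint unitary with $\Pi^{2}=I$.

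The key reduction is supplied by the lemma above: since $\mathcal{T}\circ\Pi=\Pi\circ\mathcal{T}$, the operator $\mathcal{T}$ carries each eigenspace into itself, so it restricts to maps $\mathcal{T}_{\pm}:\mathfrak{H}_{\pm}\rightarrow\mathfrak{H}_{\pm}$. I would then verify that each $\mathcal{T}_{\pm}$ is again an odd antiunitary on its subspace. Antilinearity and the inner-product identity are inherited immediately from $\mathcal{T}$, and $\mathcal{T}_{\pm}\circ\mathcal{T}_{\pm}=-I$ holds because $\mathcal{T}\circ\mathcal{T}=-I$ on all of $\mathfrak{H}$. The one point needing a word is surjectivity onto $\mathfrak{H}_{\pm}$: the relation $\mathcal{T}\circ\mathcal{T}=-I$ makes $\mathcal{T}$ invertible with $\mathcal{T}^{-1}=-\mathcal{T}$, so for $\mathbf{w}\in\mathfrak{H}_{\pm}$ the vector $-\mathcal{T}(\mathbf{w})$ again lies in $\mathfrak{H}_{\pm}$ (by the lemma) and maps back to $\mathbf{w}$.

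With the restrictions in hand, I would apply Proposition~\ref{prop:Odd_antiunitary_structure} to $\mathcal{T}_{+}$ on $\mathfrak{H}_{+}$, producing an orthonormal basis of $\mathfrak{H}_{+}$ in pairs $\boldsymbol{\phi}_{j},\boldsymbol{\psi}_{j}$ for $j\leq m$ with $\mathcal{T}(\boldsymbol{\phi}_{j})=-\boldsymbol{\psi}_{j}$ and $\mathcal{T}(\boldsymbol{\psi}_{j})=\boldsymbol{\phi}_{j}$; since these vectors lie in $\mathfrak{H}_{+}$ they automatically satisfy $\Pi\boldsymbol{\phi}_{j}=\boldsymbol{\phi}_{j}$ and $\Pi\boldsymbol{\psi}_{j}=\boldsymbol{\psi}_{j}$. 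Applying the same proposition to $\mathcal{T}_{-}$ on $\mathfrak{H}_{-}$ yields the pairs with $j>m$ carrying the $-1$ eigenvalue of $\Pi$. Concatenating the two bases gives an orthonormal basis of $\mathfrak{H}$ with all the stated properties, where $m$ is the number of pairs produced in $\mathfrak{H}_{+}$.

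I do not expect a serious obstacle; the whole content sits in the reduction to the invariant subspaces, which the lemma already makes routine. The only item easy to overlook is confirming that each restricted operator is genuinely antiunitary—in particular onto—before invoking the odd Wigner construction; once that is checked, both sectors are handled by Proposition~\ref{prop:Odd_antiunitary_structure} verbatim. A pleasant byproduct worth recording is that both $\mathfrak{H}_{+}$ and $\mathfrak{H}_{-}$ must then be even-dimensional, since each carries an odd antiunitary.
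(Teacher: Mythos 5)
Your proposal is correct and follows the same route the paper takes: the paper's (very brief) proof is precisely to restrict $\mathcal{T}$ to the two eigenspaces $\mathfrak{H}_{\pm}$ of $\Pi$, which are $\mathcal{T}$-invariant by the commutation lemma, and then run the odd Wigner construction of Proposition~\ref{prop:Odd_antiunitary_structure} in each sector separately, exactly as you describe. Your additional checks (that each restriction $\mathcal{T}_{\pm}$ is onto, hence genuinely antiunitary, and the even-dimensionality of each sector) are details the paper leaves implicit, and they are verified correctly.
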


\section{An atiunitary symmetry and a grading that anticommute}

In classes DIII and CI we have two antiunitary symmetries, one odd,
one even, and they anticommute. We just look at DIII as CI can be
dealt with by simply swapping $\mathcal{T}$and $\mathcal{C}$.

We have thus $\mathcal{T}$, an antiunitary operator on finite-dimensional
Hilbert space with $\mathcal{T}\circ\mathcal{T}=-I$, plus a unitary
$\Pi$ with $\Pi^{2}=1$ and $\mathcal{T}\circ\Pi=-\Pi\circ\mathcal{T}$.
We can derive $\mathcal{C}$ from the equation $\mathcal{C}=\mathcal{T}\circ\Pi$
if we need this.
\begin{lem}
Suppose $\mathcal{T}$ is an antiunitary on finite-dimensional Hilbert
space such that $\mathcal{T}\circ\mathcal{T}=-I$ and that $\Pi$
is a unitary with $\Pi^{2}=I$. Moreover, assume $\mathcal{T}\circ\Pi=-\Pi\circ\mathcal{T}$.
If $\Pi\mathbf{v}=\pm\mathbf{v}$ then $\Pi\mathcal{T}\left(\mathbf{v}\right)=\mp\mathcal{T}\left(\mathbf{v}\right)$.
\end{lem}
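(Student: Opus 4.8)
The plan is to mimic the one-line argument used for the companion lemma of the preceding section (the commuting case), tracking the single extra sign that anticommutation produces. First I would rewrite the hypothesis $\mathcal{T}\circ\Pi=-\Pi\circ\mathcal{T}$ in the equivalent form $\Pi\circ\mathcal{T}=-\mathcal{T}\circ\Pi$, so that $\Pi$ may be pushed through $\mathcal{T}$ at the cost of a minus sign.

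Next I would simply evaluate $\Pi\mathcal{T}(\mathbf{v})$ on an eigenvector $\mathbf{v}$ of $\Pi$. Using the rearranged relation and then the eigenvalue equation $\Pi\mathbf{v}=\pm\mathbf{v}$, this becomes
\[
\Pi\mathcal{T}(\mathbf{v})=-\mathcal{T}(\Pi\mathbf{v})=-\mathcal{T}(\pm\mathbf{v}).
\]
The only point that deserves a moment's care is the final step: although $\mathcal{T}$ is antilinear, the scalar $\pm 1$ is real and is therefore fixed by complex conjugation, so $\mathcal{T}(\pm\mathbf{v})=\pm\mathcal{T}(\mathbf{v})$. Combining this with the leading minus sign yields $\Pi\mathcal{T}(\mathbf{v})=\mp\mathcal{T}(\mathbf{v})$, which is exactly the claimed reversal of the eigenvalue's sign.

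There is no genuine obstacle here; the entire content is the bookkeeping of that one sign, and the hardest thing to get wrong is the interplay between the anticommutation sign and the antilinearity of $\mathcal{T}$ on real scalars. It is worth noting that the hypothesis $\mathcal{T}\circ\mathcal{T}=-I$ is not actually used in this argument and could be dropped, just as the value of $\mathcal{T}^{2}$ was irrelevant in the commuting analogue; the squaring condition is stated only because it fixes the ambient class (DIII) in which the lemma will be applied, namely to split $\mathfrak{H}$ into the $+1$ and $-1$ eigenspaces of $\Pi$ that $\mathcal{T}$ interchanges.
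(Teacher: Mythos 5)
Your proof is correct and is essentially identical to the paper's own one-line argument: push $\Pi$ through $\mathcal{T}$ using the anticommutation relation, apply the eigenvalue equation, and use that $\mathcal{T}$ fixes the real scalar $\pm 1$. Your side remark that the hypothesis $\mathcal{T}\circ\mathcal{T}=-I$ is never used is also accurate.
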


\begin{proof}
We see 
\[
\Pi\mathcal{T}\left(\mathbf{v}\right)=-\mathcal{T}\left(\Pi\mathbf{v}\right)=-\mathcal{T}\left(\pm\mathbf{v}\right)=\mp\mathcal{T}\left(\mathbf{v}\right).
\]
\end{proof}
A consequence of this is that $\Pi$ is balanced.
\begin{prop}
\label{prop:opposite_signs} Suppose $\mathcal{T}$ is an antiunitary
and $\Pi$ is a unitary on finite-dimensional Hilbert space. Suppose
also that
\[
\mathcal{T}\circ\mathcal{T}=-I,\ \Pi^{2}=I,\ \mathcal{T}\circ\Pi=-\Pi\circ\mathcal{T}.
\]
Then there is an orthonormal basis whose vectors come in pairs $\boldsymbol{\varphi}_{j},\boldsymbol{\psi}_{j}$
with 
\[
\mathcal{T}\left(\boldsymbol{\varphi}_{j}\right)=-\boldsymbol{\psi}_{j},\ \mathcal{T}\left(\boldsymbol{\psi}_{j}\right)=\boldsymbol{\varphi}_{j},\ \Pi\boldsymbol{\varphi}_{j}=\boldsymbol{\varphi}_{j},\ \Pi\boldsymbol{\psi}_{j}=-\boldsymbol{\psi}_{j}.
\]
\end{prop}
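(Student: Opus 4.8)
The plan is to run the pairing construction from Proposition~\ref{prop:Odd_antiunitary_structure} almost verbatim, but to draw every seed vector from the $+1$ eigenspace $\mathfrak{H}_{+}$ of $\Pi$ rather than from all of $\mathfrak{H}$. The engine that makes the two structures compatible is the lemma immediately preceding: because $\mathcal{T}$ anticommutes with $\Pi$, it carries $\mathfrak{H}_{+}$ into $\mathfrak{H}_{-}$ and $\mathfrak{H}_{-}$ into $\mathfrak{H}_{+}$. So if I always pick $\boldsymbol{\varphi}_{j}\in\mathfrak{H}_{+}$ and set $\boldsymbol{\psi}_{j}=-\mathcal{T}(\boldsymbol{\varphi}_{j})$, then $\boldsymbol{\psi}_{j}$ lands automatically in $\mathfrak{H}_{-}$, which delivers the two desired $\Pi$-eigenvalue relations for free, while the two relations involving $\mathcal{T}$ are inherited unchanged from the odd case.

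Concretely, first I would take $\boldsymbol{\varphi}_{1}$ to be any unit vector in $\mathfrak{H}_{+}$ and set $\boldsymbol{\psi}_{1}=-\mathcal{T}(\boldsymbol{\varphi}_{1})$. Inductively, having chosen $\boldsymbol{\varphi}_{1},\boldsymbol{\psi}_{1},\dots,\boldsymbol{\varphi}_{j-1},\boldsymbol{\psi}_{j-1}$, I would pick $\boldsymbol{\varphi}_{j}$ to be a unit vector in $\mathfrak{H}_{+}$ orthogonal to $\boldsymbol{\varphi}_{1},\dots,\boldsymbol{\varphi}_{j-1}$, and again set $\boldsymbol{\psi}_{j}=-\mathcal{T}(\boldsymbol{\varphi}_{j})$. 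The relation $\mathcal{T}(\boldsymbol{\psi}_{j})=\boldsymbol{\varphi}_{j}$ follows by applying $\mathcal{T}$ to the definition and using $\mathcal{T}\circ\mathcal{T}=-I$ with antilinearity, exactly as in Proposition~\ref{prop:Odd_antiunitary_structure}, while $\mathcal{T}(\boldsymbol{\varphi}_{j})=-\boldsymbol{\psi}_{j}$ holds by construction. For orthonormality, the cross terms $\langle\boldsymbol{\psi}_{j},\boldsymbol{\varphi}_{k}\rangle$ vanish for every $j,k$ simply because $\boldsymbol{\psi}_{j}\in\mathfrak{H}_{-}$ and $\boldsymbol{\varphi}_{k}\in\mathfrak{H}_{+}$ are eigenvectors of $\Pi$ with different eigenvalues; and $\langle\boldsymbol{\psi}_{j},\boldsymbol{\psi}_{k}\rangle$ reduces to $\langle\boldsymbol{\varphi}_{k},\boldsymbol{\varphi}_{j}\rangle$ through the defining inner-product identity for antiunitaries, hence vanishing for $k\neq j$ and equalling $1$ for $k=j$.

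The step I expect to carry the real content is completeness: showing the pairs exhaust $\mathfrak{H}$. The construction continues as long as $\mathfrak{H}_{+}$ admits a vector orthogonal to the seeds chosen so far, so it produces exactly $m=\dim\mathfrak{H}_{+}$ pairs, with $\boldsymbol{\varphi}_{1},\dots,\boldsymbol{\varphi}_{m}$ an orthonormal basis of $\mathfrak{H}_{+}$. The point is that the corresponding $\boldsymbol{\psi}_{1},\dots,\boldsymbol{\psi}_{m}$ then form an orthonormal basis of $\mathfrak{H}_{-}$: since $\mathcal{T}$ is an antiunitary bijection of $\mathfrak{H}_{+}$ onto $\mathfrak{H}_{-}$ (its inverse, up to sign, being the restriction of $\mathcal{T}$ to $\mathfrak{H}_{-}$), it carries an orthonormal basis to an orthonormal basis, which is precisely the balance of $\Pi$ already recorded above. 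Consequently the $2m$ vectors span $\mathfrak{H}_{+}\oplus\mathfrak{H}_{-}=\mathfrak{H}$ with no vector of $\mathfrak{H}_{-}$ left uncovered. This is the one place where anticommutativity does genuine work, through the lemma, rather than merely reshuffling the odd-case argument.
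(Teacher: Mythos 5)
Your proposal is correct and follows exactly the paper's approach: the paper's proof is the single sentence ``follow the proof of Proposition~\ref{prop:Odd_antiunitary_structure}, taking care to always select $\boldsymbol{\varphi}_{j}$ in the $+1$ eigenspace for $\Pi$,'' which is precisely your construction. You additionally spell out details the paper leaves implicit --- that cross-orthogonality is automatic from the $\Pi$-eigenspaces, and that completeness follows because $\mathcal{T}$ restricts to an antiunitary bijection $\mathfrak{H}_{+}\rightarrow\mathfrak{H}_{-}$, so the $\boldsymbol{\psi}_{j}$ exhaust $\mathfrak{H}_{-}$ --- which is a faithful filling-in rather than a different route.
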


\begin{proof}
Here we just follow the proof as in the proof of Proposition~\ref{prop:Odd_antiunitary_structure},
taking care to always select $\mathbf{\phi}_{j}$ in the $+1$ eigenspace
for $\Pi$. 
\end{proof}

\section{Examples}

\subsection{One even antiunitary}
\begin{rem}
The obvious algorithm derived from the proof of \ref{prop:Even_antiunitary_structure}
can be applied to single particle models in both Class AI and Class
D. The only difference in class D is one applies all the above to
$\mathcal{Q}$ instead of $\mathcal{T}$. It is also useful looking
at a derived antiunitary symmetry applying to a spectral localizer
in class AII, dimension two. The random looking unitary \cite[Eqn 6.2]{hastings2010almost}
that adjust that matrix to be actually skew-symmetric can be found
as in the following example. 
\end{rem}

\begin{example}
Suppose $\mathcal{T}=U\circ\mathcal{K}$ where
\[
U=(i\sigma_{y})\otimes(i\sigma_{y})=\left[\begin{array}{cccc}
0 & 0 & 0 & 1\\
0 & 0 & -1 & 0\\
0 & -1 & 0 & 0\\
1 & 0 & 0 & 0
\end{array}\right].
\]
This satisfies $\mathcal{T}\circ\mathcal{T}=+I$ so we follow the
proof of Proposition~\ref{prop:Even_antiunitary_structure}. Let's
pick 
\[
\boldsymbol{\varphi}_{1}=\left[\begin{array}{c}
1\\
0\\
0\\
0
\end{array}\right].
\]
This means
\[
\boldsymbol{\psi}_{1}=\frac{1}{\sqrt{2}}\left(\boldsymbol{\varphi}_{1}+U\overline{\boldsymbol{\varphi}_{1}}\right)=\frac{1}{\sqrt{2}}\left[\begin{array}{c}
1\\
0\\
0\\
1
\end{array}\right].
\]
 Orthogonal to this is 
\[
\boldsymbol{\varphi}_{2}=\left[\begin{array}{c}
0\\
1\\
0\\
0
\end{array}\right]
\]
which leads to 
\[
\boldsymbol{\psi}_{2}=\frac{1}{\sqrt{2}}\left(\boldsymbol{\varphi}_{2}+U\overline{\boldsymbol{\varphi}_{2}}\right)=\frac{1}{\sqrt{2}}\left[\begin{array}{c}
0\\
1\\
-1\\
0
\end{array}\right].
\]
Orthogonal to $\boldsymbol{\psi}_{1}$ and $\boldsymbol{\psi}_{2}$
is 
\[
\boldsymbol{\varphi}_{3}=\frac{1}{\sqrt{2}}\left[\begin{array}{c}
0\\
1\\
1\\
0
\end{array}\right]
\]
which leads to $U\boldsymbol{\varphi}_{3}^{*}=-\boldsymbol{\varphi}_{3}$
and so
\[
\boldsymbol{\psi}_{3}=i\boldsymbol{\varphi}_{3}=\frac{1}{\sqrt{2}}\left[\begin{array}{c}
0\\
i\\
i\\
0
\end{array}\right].
\]
Finally we can select 
\[
\boldsymbol{\varphi}_{4}=\frac{1}{\sqrt{2}}\left[\begin{array}{c}
1\\
0\\
0\\
=1
\end{array}\right]
\]
which leads to $U\overline{\boldsymbol{\varphi}_{4}}=-\boldsymbol{\varphi}_{4}$
and so
\[
\boldsymbol{\psi}_{4}=\left[\begin{array}{c}
i\\
0\\
0\\
-i
\end{array}\right].
\]
The final result is
\[
Q=\frac{1}{\sqrt{2}}\left[\begin{array}{cccc}
1 & 0 & 0 & i\\
0 & 1 & i & 0\\
0 & -1 & i & 0\\
1 & 0 & 0 & -i
\end{array}\right].
\]
\end{example}

Here $Q$ is far from unique. In particular, we might end up with
$\det Q=1$ or $\det Q=-1$. When computing Pfaffians, this will be
mildly annoying. We can use a fixed $Q$ to change basis on everything
(states and observables) and compute in a standard picture. This ambiguity
also shows up in the well-known fact regarding the algebra quaternions
$\mathbb{H}$. The real algebras $\mathbb{H}\otimes\mathbb{H}$ and
$\boldsymbol{M}_{4}(\mathbb{R})$ are isomorphic but there really
isn't a canonical choice of isomorphism. However, $\mathbb{H}$ is
isomorphic to a real algebra of complex matrices \cite{loring2012factorization_quaternions},
\[
\mathbb{H}\cong\left\{ \left.M\in\boldsymbol{M}_{2}(\mathbb{C})\strut\right|M\circ\mathcal{T}_{-}=\mathcal{T}_{-}\circ M\right\} 
\]
for $\mathcal{T}_{-}=i\sigma_{y}\circ\mathcal{K}$. Thus is is obvious
that
\[
\mathbb{H}\otimes\mathbb{H}\cong\left\{ \left.M\in\boldsymbol{M}_{4}(\mathbb{C})\strut\right|M\circ\mathcal{T}=\mathcal{T}\circ M\right\} 
\]
for $\mathcal{T}=\left(i\sigma_{y}\otimes i\sigma_{y}\right)\circ\mathcal{K}$.
The unitary $Q$ in this example gives us one explicit isomorphism,
by conjugation, between these real $C^{*}$-algebras.

See \cite{doll2021skew_localizer}, where many similar odd-looking
unitaries are used to convert antiunitary symmetries that natuarally
apply to the spectral localizer over into standard symmetries as a
necessary step in defining various $\mathbb{Z}_{2}$-valued local
topological invariants. 

\subsection{One odd antiunitary}
\begin{rem}
The algorithm suggested by the proof of Proposition~\ref{prop:Odd_antiunitary_structure}
will apply to $\mathcal{C}$ class C and to $\mathcal{T}$ in class
AII. 
\end{rem}

\begin{example}
Suppose $\mathcal{T}\left(\boldsymbol{v}\right)=U\overline{\boldsymbol{v}}$
where
\[
U=\frac{1}{\sqrt{2}}\left[\begin{array}{cccc}
0 & 0 & 1 & 1\\
0 & 0 & 1 & -1\\
-1 & -1 & 0 & 0\\
-1 & 1 & 0 & 0
\end{array}\right].
\]
This has $\mathcal{T}\circ\mathcal{T}=I$ so we follow the proof of
Proposition~\ref{prop:Odd_antiunitary_structure}. We select 
\[
\boldsymbol{\varphi}_{1}=\left[\begin{array}{c}
1\\
0\\
0\\
0
\end{array}\right].
\]
Then 
\[
\boldsymbol{\psi}_{1}=-U\overline{\boldsymbol{\varphi}_{1}}=-U\boldsymbol{\varphi}_{1}=\left[\begin{array}{c}
0\\
0\\
\tfrac{1}{\sqrt{2}}\\
\tfrac{1}{\sqrt{2}}
\end{array}\right].
\]
We now pick
\[
\boldsymbol{\varphi}_{2}=\left[\begin{array}{c}
0\\
1\\
0\\
0
\end{array}\right]
\]
and get 
\[
\boldsymbol{\psi}_{2}=-U\overline{\boldsymbol{\varphi}_{2}}=-U\boldsymbol{\varphi}_{2}=\left[\begin{array}{c}
0\\
0\\
\tfrac{1}{\sqrt{2}}\\
\tfrac{-1}{\sqrt{2}}
\end{array}\right].
\]
The final result is
\[
Q=\left[\begin{array}{cccc}
1 & 0 & 0 & 0\\
0 & 0 & 1 & 0\\
0 & \tfrac{1}{\sqrt{2}} & 0 & \tfrac{1}{\sqrt{2}}\\
0 & \tfrac{1}{\sqrt{2}} & 0 & \tfrac{-1}{\sqrt{2}}
\end{array}\right].
\]
\end{example}

\subsection{An antiunitary anticommuting with time reversal}

The following example is based on how time-reversal and particle-charge
conjugation work in a physical model in class DIII \cite{diez2014bimodal}.
\begin{example}
Suppose $\mathcal{T}\left(\boldsymbol{v}\right)=U\boldsymbol{v}^{*}$
where
\[
U=I\otimes\sigma_{y}=\left[\begin{array}{cccc}
0 & 0 & -i & 0\\
0 & 0 & 0 & -i\\
i & 0 & 0 & 0\\
0 & i & 0 & 0
\end{array}\right]
\]
and
\[
\Pi=\sigma_{x}\otimes\sigma_{y}=\left[\begin{array}{cccc}
0 & 0 & 0 & -i\\
0 & 0 & -i & 0\\
0 & i & 0 & 0\\
i & 0 & 0 & 0
\end{array}\right].
\]
We find
\[
\mathcal{T}\circ\mathcal{T}=U\circ\mathcal{K}\circ U\circ\mathcal{K}=UU^{*}=-U^{2}=-I
\]
and clearly $\Pi^{2}=1$ and $\Pi U=U\Pi$. Also 
\[
\mathcal{T}\circ\Pi=U\circ\mathcal{K}\circ\Pi=U\circ\Pi^{*}\circ\mathcal{K}=-U\circ\Pi\circ\mathcal{K}=-\Pi\circ U\circ\mathcal{K}=-\Pi\circ\mathcal{T}.
\]
Thus this example sits in class DIII. We will follow the proof of
Proposition~\ref{prop:opposite_signs}.

We can find a single unitary that conjugates these over $\mathcal{T}_{0}$
and $\Pi_{0}$ where $\mathcal{T}_{0}\left(\boldsymbol{v}\right)=U_{0}\overline{\boldsymbol{v}}$
and 
\[
U_{0}=\left[\begin{array}{cccc}
0 & 1 & 0 & 0\\
-1 & 0 & 0 & 0\\
0 & 0 & 0 & 1\\
0 & 0 & -1 & 0
\end{array}\right],
\]
\[
\Pi_{0}=\left[\begin{array}{cccc}
1 & 0 & 0 & 0\\
0 & -1 & 0 & 0\\
0 & 0 & 1 & 0\\
0 & 0 & 0 & -1
\end{array}\right].
\]

We need some eigenvectors for the eigenvalue $1$ for $\Pi$. These
look like
\[
\left[\begin{array}{c}
\alpha\\
\beta\\
i\beta\\
i\alpha
\end{array}\right].
\]
We pick 
\[
\boldsymbol{\varphi}_{1}=\frac{1}{\sqrt{2}}\left[\begin{array}{c}
1\\
0\\
0\\
i
\end{array}\right]
\]
and this gives us 
\[
\boldsymbol{\psi}_{1}=-U\overline{\boldsymbol{\varphi}_{1}}=\frac{1}{\sqrt{2}}\left[\begin{array}{cccc}
0 & 0 & i & 0\\
0 & 0 & 0 & i\\
-i & 0 & 0 & 0\\
0 & -i & 0 & 0
\end{array}\right]\left[\begin{array}{c}
1\\
0\\
0\\
-i
\end{array}\right]=\frac{1}{\sqrt{2}}\left[\begin{array}{c}
0\\
1\\
-i\\
0
\end{array}\right].
\]
Next we pick 
\[
\boldsymbol{\varphi}_{2}=\frac{1}{\sqrt{2}}\left[\begin{array}{c}
0\\
1\\
i\\
0
\end{array}\right]
\]
and this gives us 
\[
\psi_{2}=-U\overline{\boldsymbol{\varphi}_{2}}=\frac{1}{\sqrt{2}}\left[\begin{array}{cccc}
0 & 0 & i & 0\\
0 & 0 & 0 & i\\
-i & 0 & 0 & 0\\
0 & -i & 0 & 0
\end{array}\right]\left[\begin{array}{c}
0\\
-1\\
i\\
0
\end{array}\right]=\frac{1}{\sqrt{2}}\left[\begin{array}{c}
-1\\
0\\
0\\
-i
\end{array}\right].
\]
The needed unitary is
\[
Q=\frac{1}{\sqrt{2}}\left[\begin{array}{cccc}
1 & 0 & 0 & -1\\
0 & 1 & 1 & 0\\
0 & -i & i & 0\\
i & 0 & 0 & -i
\end{array}\right].
\]
To verify, we compute $Q^{*}\Pi Q=\Pi_{0}$ and $Q^{*}UQ^{*}=U_{0}$. 
\end{example}

\section*{Akknowledgements}

Research by T.A.L.~was sponsored by the Army Research Office and
was accomplished under Grant Number W911NF-25-1-0052. The views and
conclusions contained in this document are those of the authors and
should not be interpreted as representing the official policies, either
expressed or implied, of the Army Research Office or the U.S.\textasciitilde Government.
The U.S.~Government is authorized to reproduce and distribute reprints
for Government purposes notwithstanding any copyright notation herein.

\end{document}